\tikzstyle{knoten}=[circle,draw=black,thin,fill=white,inner sep=0pt,minimum size=4.5mm]
\tikzstyle{knotenklein}=[circle,draw=black,thin,fill=white,inner sep=0pt,minimum size=2.5mm]
\global\long\def\NP{\mathcal{NP}}
\newtheorem{theorem}{Theorem}
\newproof{proof}{Proof}
\newdefinition{definition}{Definition}
\renewcommand{\o}[1]{\overline{#1}}
\def\OO{\mathcal{O}}
\newcommand{\pt}{\ensuremath{\widetilde{p}}}
\newcommand{\floor}[1]{
	\left\lfloor #1 \right\rfloor
}
\def\NAT@spacechar{~}
\def\ps@pprintTitle{%
 \let\@oddhead\@empty
 \let\@evenhead\@empty
 \def\@oddfoot{}%
 \let\@evenfoot\@oddfoot}
\begin{document}

\begin{frontmatter}

\title{An FPTAS for the Parametric Knapsack Problem\tnoteref{titleref}}

\author[TUKL]{Michael Holzhauser\corref{cor1}}
\ead{holzhauser@mathematik.uni-kl.de}

\author[TUKL]{Sven O. Krumke}
\ead{krumke@mathematik.uni-kl.de}

\cortext[cor1]{Corresponding author. Fax: +49 (631) 205-4737. Phone: +49 (631) 205-2511}

\address[TUKL]{\normalfont{University of Kaiserslautern, Department of Mathematics\\
  Paul-Ehrlich-Str.~14, D-67663~Kaiserslautern, Germany}}

\begin{abstract}
	\setlength{\parindent}{0pt}%
	In this paper, we investigate the parametric knapsack problem, in which the item profits are affine functions depending on a real-valued parameter. The aim is to provide a solution for all values of the parameter. It is well-known that any exact algorithm for the problem may need to output an exponential number of knapsack solutions.

	We present a fully polynomial-time approximation scheme (FPTAS) for the problem that, for any desired precision~$\varepsilon \in (0,1)$, computes $(1-\varepsilon)$-approximate solutions for all values of the parameter. This is the first FPTAS for the parametric knapsack problem that does not require the slopes and intercepts of the affine functions to be non-negative but works for arbitrary integral values. Our FPTAS outputs~$\OO(\frac{n^2}{\varepsilon})$~knapsack solutions and runs in strongly polynomial-time $\OO(\frac{n^4}{\varepsilon^2})$. Even for the special case of positive input data, this is the first FPTAS with a strongly polynomial running time. We also show that this time bound can be further improved to $\OO(\frac{n^2}{\varepsilon} \cdot A(n,\varepsilon))$, where $A(n,\varepsilon)$ denotes the running time of any FPTAS for the traditional (non-parametric) knapsack problem.
\end{abstract}

\begin{keyword}
	knapsack problems \sep parametric optimization \sep approximation algorithms
\end{keyword}

\end{frontmatter}

\section{Introduction}
\label{sec:Intro}

The knapsack problem is one of the most fundamental combinatorial optimization problems: Given a set of $n$ items with weights and profits and a knapsack capacity, the task is to choose a subset of the items with a maximum profit such that the weight of these items does not exceed the knapsack capacity. The problem is known to be weakly $\NP$-hard and solvable in pseudo-polynomial time. Moreover, several constant factor approximation algorithms and approximation schemes have been developed for the problem \citep{IbarraKimKnapsack,LawlerCombinatorialOptimization,MagazineOguzKnapsack,KellererPferschyFPTAS,KellererPferschyFPTAS2} (cf. \citep{Knapsack} for an overview).

In this paper, we investigate a generalization of the problem in which the profits are no longer constant but affine functions depending on a parameter~$\lambda \in \mathbb{R}$. More precisely, for a knapsack with \emph{capacity}~$W$ and for each \emph{item}~$i$ in the \emph{item set}~$\{1,\ldots,n\}$ with \emph{weight}~$w_i \in \mathbb{N}_{> 0}$, the \emph{profit}~$p_i$ is now of the form $p_i(\lambda) \colonequals a_i + \lambda \cdot b_i$ with $a_i,b_i \in \mathbb{Z}$. The resulting optimization problem can be stated as follows:
\begin{align*}
	p^*(\lambda) = \max\ & \sum_{i=1}^n (a_i + \lambda \cdot b_i) \cdot x_i \\
	& \sum_{i=1}^n w_i \cdot x_i \leq W \\
	& x_i \in \{0,1\} \quad \forall i \in \{1,\ldots,n\}
\end{align*}
The aim of this \emph{parametric knapsack problem} is to return a partition of the real line into intervals~$(-\infty,\lambda_1], [\lambda_1,\lambda_2], \ldots, [\lambda_{k-1},\lambda_k], [\lambda_k,+\infty)$ together with a solution~$x^*$ for each interval such that this solution is optimal for all values of $\lambda$ in the interval. The function mapping each $\lambda \in \mathbb{R}$ to the profit of the corresponding optimal solution is called the \emph{optimal profit function} and will be denoted by $p^*(\lambda)$ in the following. It is easy to see that $p^*$ is continuous, convex, and piecewise linear with breakpoints~$\lambda_1,\ldots,\lambda_k$ \citep{Knapsack}.

Clearly, since the parametric knapsack problem is a generalization of the traditional (non-parametric) knapsack problem, it is at least as hard to solve as the knapsack problem. In fact, it was shown that, even in the case of integral input data, the minimum number of breakpoints of the optimal profit function can be exponentially large, so any exact algorithm may need to return an exponential number of knapsack solutions \citep{CarstensenParametricProblems}. In this paper, we are interested in a \emph{fully polynomial time approximation scheme} for the parametric knapsack problem. We will show that, for any desired precision~$\varepsilon \in (0,1)$, a polynomial number of intervals suffices in order to be able to provide a $(1-\varepsilon)$-approximate solution for each $\lambda \in \mathbb{R}$.

Without loss of generality, we may assume that $w_i \leq W$ for each $i \in \{1,\ldots,n\}$ since otherwise we are not able to chose item~$i$ at all. However, note that we do not set any further restrictions on the profits, i.e., the parameters~$a_i$ and $b_i$. In particular, each profit may become negative for some specific value of $\lambda$. It is even possible that there a no profitable items at all for some values of $\lambda$.

\subsection{Previous work}

A large number of publications investigated parametric versions of well-known problems. This includes the parametric shortest path problem \citep{KarpOrlinParametricShortestPath,YoungTarjanOrlinParametricShortestPath,CarstensenParametricProblemsDisseration,MulmuleyShortestPathLowerBound}, the parametric minimum spanning tree problem \citep{FernandezParametricMinimumSpanningTree,AgarwalParametricMinimumSpanningTree}, the parametric maximum flow problem \citep{GalloParametricMaxFlow,McCormickParametricMaxFlow,ScutellaParametricMaxFlow}, and the parametric minimum cost flow problem \citep{CarstensenParametricProblems} (cf. \citep{SpundParametricKnapsack} for an overview). The parametric knapsack problem considered here was first investigated by \citet{CarstensenParametricProblems}. She showed that the number of breakpoints of the optimal profit function can become exponentially large in general. If the parameters are integral, the number of breakpoints can still attain a pseudo-polynomial size. The first specialized exact algorithm for the problem was presented by \citet{ChaimeParametricKnapsack}, who showed that the problem can be solved in $\OO(knW)$~time, where $k$ denotes the number of breakpoints of the optimal profit function~$p^*$.

The first approximation scheme for the problem was recently published by \citet{SpundParametricKnapsack}. The authors presented a generalization of the standard polynomial-time approximation scheme for the knapsack problem, resulting in a PTAS for the problem with a running time in $\OO(\frac{1}{\varepsilon^2} \cdot n^{\frac{1}{\varepsilon}+2})$. In the special case of positive values of $\lambda$ as well as non-negative values of $a_i$ and $b_i$ for each $i \in \{1,\ldots,n\}$, the authors show that an algorithm of \citet{ErlebachMultiObjectiveKnapsack} for the bicriteria knapsack problem can be used to obtain an FPTAS for the parametric knapsack problem running in $\OO(\frac{n^3}{\varepsilon^2} \cdot \log^2 \mathrm{UB}_{max})$~time, where $\mathrm{UB}_{max}$ denotes an upper bound on the maximum possible profit with respect to both of the profit functions $\sum_{i=1}^n a_i \cdot x_i$ and $\sum_{i=1}^n b_i \cdot x_i$.

\subsection{Our contribution}

We present the first FPTAS for the parametric knapsack problem without the restriction to non-negative input data. In particular, we show that we only need a total number of $\OO(\frac{n^2}{\varepsilon})$~intervals to approximate the problem (which, itself, may need an exponential number of intervals as described above) and that we can compute an approximate solution for each interval in $\OO(\frac{n^2}{\varepsilon})$~time, yielding an FPTAS with a strongly polynomial running time of $\OO(\frac{n^4}{\varepsilon^2})$. Our algorithm is the first FPTAS for the problem with a strongly polynomial running time, being superior to the PTAS of \citet{SpundParametricKnapsack} for $\varepsilon \leq 0.5$ and, in the special case of positive input data, superior to their FPTAS for large input values. In a second step, we improve this result to a running time of $\OO(\frac{n^2}{\varepsilon} \cdot A(n,\varepsilon))$, where $A(n,\varepsilon)$ denotes the running time of any FPTAS for the traditional knapsack problem. Using the FPTAS of \citet{KellererPferschyFPTAS,KellererPferschyFPTAS2}, this yields a time bound of $\OO\left( \frac{n^3}{\varepsilon} \log \frac{1}{\varepsilon} + \frac{n^2}{\varepsilon^4} \log^2 \frac{1}{\varepsilon} \right)$ for the parametric knapsack problem.

\subsection{Organization}

The results of this paper are divided into three main parts. In Section~\ref{sec:TwoApprox}, we show how we can generalize the well-known greedy-like $\frac{1}{2}$-approximation algorithm for the traditional knapsack problem to the parametric setting and how the resulting profit function can be ``smoothened'' such that it becomes convex and continuous without losing the approximation guarantee. This will be the key ingredient for the parametric FPTAS, which will be presented in Section~\ref{sec:FPTAS}. We will first recapitulate a basic FPTAS for the traditional knapsack problem in Section~\ref{sec:FPTAS:Traditional} and then extend it to the parametric case in Section~\ref{sec:FPTAS:Parametric}, presenting a first time bound for the resulting FPTAS. In Section~\ref{sec:FPTAS:Analysis}, as a main result of the paper, we present an improved analysis yielding the claimed running time of the FPTAS. Finally, in Section~\ref{sec:FPTAS2}, we show that is suffices to solve the corresponding subproblems only approximately so that we can incorporate traditional FPTASs, which improves the running time of our algorithm to the claimed one.

\section{Obtaining a parametric $\frac{1}{2}$-approx\-imation}
\label{sec:TwoApprox}

The parametric FPTAS will rely on a convex and continuous $\frac{1}{2}$-approximation of the optimal profit function~$p^*(\lambda)$, i.e., a parametric $\frac{1}{2}$-approximation algorithm for the parametric knapsack problem. We will therefore present such an algorithm in this section and describe how we can guarantee these properties of the function.

\subsection{Traditional $\frac{1}{2}$-approximation algorithm}
\label{sec:TwoApprox:Traditional}

The basic (non-parametric) $\frac{1}{2}$-approximation algorithm proceeds as follows: In a first step, the algorithm sorts the items in decreasing order of their ratios~$\frac{p_i}{w_i}$, which can be done in $\OO(n \log n)$~time. It then packs the items in this ordering until the next item~$k$ with $k \geq 2$ would violate the knapsack capacity (or until there are no items left), yielding a feasible solution~$x'$. The algorithm either returns $x'$ or, if better, the solution containing only an item with the largest profit~$p^{(\max)}$. If $x^*$ denotes an optimal solution to the given knapsack instance, $x^A$ the solution returned by the above algorithm, and $x^{LP}$ a solution to the LP-relaxation of the problem, we get that
\begin{align*}
	p^A \colonequals p(x^A) &\colonequals \sum_{i=1}^n p_i \cdot x^A_i = \max\left\{ p^{(\max)}, \sum_{i=1}^{k-1} p_i \right\} \\
	&\geq \frac{1}{2} \cdot \sum_{i=1}^k p_i \geq \frac{1}{2} \cdot p(x^{\mathrm{LP}}) \geq \frac{1}{2} \cdot p(x^*),
\end{align*}
so $x^A$ is a $\frac{1}{2}$-approximation. We refer to \citep{Knapsack} for further details on this standard algorithm.

\subsection{Parametric $\frac{1}{2}$-approximation algorithm}
\label{sec:TwoApprox:Parametric}

In the parametric knapsack problem, the profits are affine functions of the form~$p_i(\lambda) = a_i + \lambda \cdot b_i$ such that the optimal profit~$p^*$ changes with $\lambda$. However, note that the solution~$x'$ of the traditional $\frac{1}{2}$-approximation algorithm only depends on the ordering of the items and, thus, remains constant as long as this ordering does not change. Moreover, two items can only change their relative ordering if their profit functions intersect, yielding $\OO(n^2)$ intervals~$I'_j$, within which the ordering of the items remains unchanged. For all values of $\lambda$ in such an interval~$I'_j$, the algorithm computes the same solution~$x'$, which has a profit of the form~$p^{(j)}(\lambda) \colonequals \alpha^{(j)} + \lambda \cdot \beta^{(j)}$. For each $\lambda \in I'_j$, the $\frac{1}{2}$-approximation algorithm either returns $x'$ with profit $p^{(j)}(\lambda)$ or the most valuable item only. One possibility to obtain a parametric $\frac{1}{2}$-approximation algorithm would be to consider each interval~$I'_j$ separately and to divide it into subintervals, depending on whether $p^{(j)}(\lambda)$ or $p^{(\max)}(\lambda)$ is larger, where $p^{(\max)}$ denotes the profit of the most valuable item (which, again, now depends on $\lambda$). However, the resulting piecewise linear function~$p^A(\lambda)$ is not necessarily continuous or convex, which will be required later, though (see Figure~\ref{fig:TwoApprox}).

Instead, we ignore the intervals~$I'_j$ and only consider the above affine functions~$p^{(j)}(\lambda) = \alpha^{(j)} + \lambda \cdot \beta^{(j)}$. Let $S$ denote the set of all such functions together with the function~$p^{(0)}(\lambda) \colonequals 0$ and each profit function~$p_i(\lambda)$. By computing the upper envelope of the~$\OO(n^2)$~functions in $S$, we obtain a function~$\varphi$, which is based on feasible solutions whose profit is not smaller than $p^A(\lambda)$ at each~$\lambda \in \mathbb{R}$ (see the dotted curve in Figure~\ref{fig:TwoApprox}). By standard arguments, it follows that $\varphi$ is convex, piecewise linear, and continuous as it is the pointwise maximum of affine functions. For $m$~functions, the upper envelope can be computed in $\OO(m \log m)$~time as shown\footnotemark\xspace by \citet{HershbergerUpperEnvelope}. Within the same time bound, we can sort the resulting intervals by increasing values of their left boundary. Hence, we obtain a piecewise linear, continuous, and convex $\frac{1}{2}$-approximation $\varphi$ with $\OO(n^2)$~breakpoints in $\OO(n^2 \log n)$~time. In the following, we will refer to the intervals between the breakpoints of $\varphi$ as $I_1,\ldots,I_q$.
\footnotetext{Strictly speaking, the author proves the result for finite line segments and not for straight lines. However, it is easy to compute upper and lower bounds for the smallest and largest possible intersection point of two of the involved functions, respectively, and to reduce the problem to the resulting interval.}

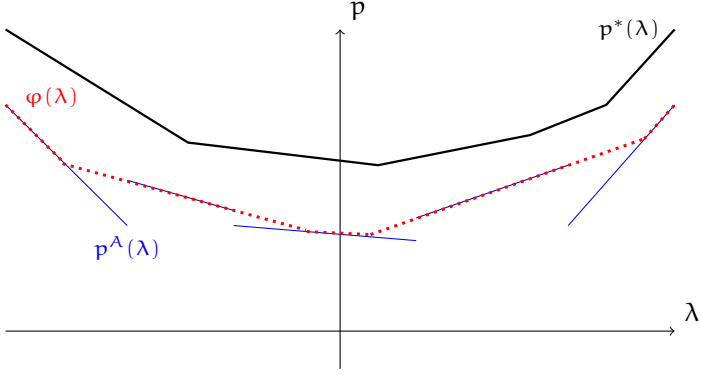
\begin{figure}[ht!]
	\begin{tikzpicture}
		\draw[->] (-4.4,0) -- (4.4,0) node[above right] {\small $\lambda$};
		\draw[->] (0,-0.5) -- (0,4) node[above right] {\small $p$};

		\draw[line width=0.3mm] (-4.4,4) -- (-2,2.5) -- (0.5,2.2) -- (2.5,2.6) -- (3.5,3.0) -- (4.4,4);
		\node at (3.8,4) {\scriptsize $p^*(\lambda)$};

		\draw[blue!100!white] (-4.4,3) -- (-2.8,1.4);
		\draw[blue!100!white] (-2.8,2.0) -- (-1.4,1.6);
		\draw[blue!100!white] (-1.4,1.4) -- (1.0,1.2);
		\draw[blue!100!white] (1.0,1.5) -- (3.0,2.2);
		\draw[blue!100!white] (3.0,1.4) -- (4.4,3.0);

		\draw[dotted,line width=0.4mm,red] (-4.4,3) -- (-3.65,2.25);
		\draw[dotted,line width=0.4mm,red] (-3.65,2.22) -- (-0.45,1.34);
		\draw[dotted,line width=0.4mm,red] (-0.44,1.32) -- (0.4,1.28);
		\draw[dotted,line width=0.4mm,red] (0.4,1.28) -- (4.0,2.55);
		\draw[dotted,line width=0.4mm,red] (4.0,2.55) -- (4.4,3.0);

		\node[red] at (-3.8,3.1) {\scriptsize $\varphi(\lambda)$};
		\node[blue] at (-2.8,1.1) {\scriptsize $p^A(\lambda)$};
	\end{tikzpicture}
	\caption{The optimal solution value~$p^*$ (black thick line), the profit function~$p^A$ of the $\frac{1}{2}$-approximation algorithm (blue straight lines), and the ``smoothed'' curve~$\varphi$ (red dotted lines).}
	\label{fig:TwoApprox}
\end{figure}

\section{Obtaining a parametric FPTAS}
\label{sec:FPTAS}

Before we explain the parametric FPTAS in detail, we first recapitulate the basic FPTAS for the traditional (non-parametric) knapsack problem as introduced by \citet{LawlerKnapsack} since its way of proceeding is crucial for the understanding of the parametric version.

\subsection{Traditional FPTAS}
\label{sec:FPTAS:Traditional}

Consider the case of some fixed value for $\lambda$, such that the profits have a constant, but possibly negative value~$p_i$. The basic FPTAS for the traditional knapsack problem is based on a well-known dynamic programming scheme, which was originally designed to solve the problem exactly in pseudo-polynomial time: Let $P$ denote an upper bound on the maximum profit of a solution to the given instance. For $k \in \{0,\ldots,n\}$ and $p \in \{0,\ldots,P\}$, let $w(k,p)$ denote the minimum weight that is necessary in order to obtain a profit of exactly $p$ with those items in the item set~$\{1,\ldots,k\}$ that have a non-negative\footnote{Note that items with a negative profit will not be present in an optimal solution.} profit. For $k=0$, we set $w(0,p) = 0$ for $p=0$ and $w(0,p) = W+1$ for $p > 0$. For $k \in \{1,\ldots,n\}$ and for the case that $p_k \in \{0,\ldots,p\}$, we compute the values~$w(k,p)$ recursively by $w(k,p) = \min\{w(k-1,p), w(k-1,p-p_k) + w_k\}$, representing the choice to either not pack the item or to pack it, respectively. Else, if $p_k \notin \{0,\ldots,p\}$, we set $w(k,p) = w(k-1,p)$ since we can omit negative item- and knapsack-profits. The largest value of $p$ such that $w(n,p) \leq W$ then yields the optimal solution to the problem. The procedure runs in pseudo-polynomial time $\mathcal{O}(nP)$.

The idea of the basic FPTAS is to scale down the item profits~$p_i$ to new values~$\pt_i \colonequals \floor{\frac{p_i}{M}}$, where $M \colonequals \frac{\varepsilon \cdot \o{p}}{n}$ for some value~$\o{p}$ fulfilling $\frac{1}{2} \cdot p^* \leq \o{p} \leq p^*$. Instead of setting $\o{p} \colonequals p^A$ as it is done in the traditional FPTAS, we can alternatively use our improved $2$-approximate solution~$\varphi$. Since the maximum possible profit~$\widetilde{P}$ is then given by
\begin{align*}
\widetilde{P} \leq \sum_{i=1}^n \pt_i \leq \sum_{i=1}^n \frac{n \cdot p_i}{\varepsilon \cdot \varphi} \leq \frac{n}{\varepsilon} \cdot \frac{p^*}{\varphi} \leq \frac{2n}{\varepsilon},
\end{align*}
the procedure runs in polynomial time~$\OO(\frac{n^2}{\varepsilon})$. The crucial observation is that we only lose a factor of $(1-\varepsilon)$ by scaling down the profits, so the solution obtained by the above dynamic programming scheme applied to the scaled profits yields a $(1-\varepsilon)$-approximate solution for the problem (see \citep{LawlerKnapsack,Knapsack} for further details on the algorithm).

\subsection{Parametric scaling}
\label{sec:FPTAS:Parametric}

Although the parametric FPTAS is based on the basic FPTAS, the instance parameters now depend on $\lambda$ and, thus, change while $\lambda$ increases. In particular, both the item profits and $\varphi$ now depend on $\lambda$, so the scaled profits~$\pt_i(\lambda) \colonequals \floor{\frac{n \cdot p_i(\lambda)}{\varepsilon \cdot \varphi(\lambda)}}$ have a highly non-linear behavior. Nevertheless, similar to the parametric $\frac{1}{2}$-approximation considered in Section~\ref{sec:TwoApprox:Parametric}, the solution returned by the FPTAS does not change as long as the profit~$\pt_i(\lambda)$ of each item remains constant. Hence, if $I'_j$ denotes an interval such that the scaled profits~$\pt_i$ remain constant for each $\lambda \in I'_j$, we can evaluate the dynamic programming scheme with the profits~$\pt_i$ to obtain a $(1-\varepsilon)$-approximate solution for the interval~$I'_j$. The proof of the polynomial running time and the approximation guarantee remain unchanged.

It remains to show how we can divide the real line into a polynomial number of intervals such that the profits remain constant in each interval. The basic FPTAS will then be evaluated for each such interval subsequently (using the corresponding constant scaled profits) in order to obtain $(1-\varepsilon)$-approximate solutions for the whole real line.

One natural idea would be to build on those intervals described in Section~\ref{sec:TwoApprox:Parametric} for which $\varphi(\lambda)$ behaves like an affine function: Let $I_1,\ldots,I_q$ with $q \in \mathcal{O}(n^2)$ denote the affine segments of $\varphi$ such that, for each $j \in \{1,\ldots,q\}$, the function $\varphi$ takes on some affine form~$\varphi(\lambda) = \alpha^{(j)} + \lambda \cdot \beta^{(j)}$ for $\lambda \in I_j$.

Now consider one such interval~$I_j$. If $\varphi(\lambda) = 0$ for $\lambda \in I_j$, it also holds that $p^*(\lambda) = 0$ since $\varphi(\lambda) \geq \frac{1}{2} \cdot p^*(\lambda)$ for $\lambda \in \mathbb{R}$, so the all-zero solution is optimal. Otherwise, for the non-rounded scaled profit of each item~$i$, it holds that
\[ \frac{n \cdot p_i(\lambda)}{\varepsilon \cdot \varphi(\lambda)} = \frac{n \cdot (a_i + \lambda \cdot b_i)}{\varepsilon \cdot (\alpha^{(j)} + \lambda \cdot \beta^{(j)})} \equalscolon \frac{n}{\varepsilon} \cdot f_i(\lambda). \]
These functions~$f_i$ are monotonous, since the first derivative fulfills
\begin{align*}
	\frac{df_i}{d\lambda} (\lambda) &= \frac{b_i \cdot (\alpha^{(j)} + \lambda \cdot \beta^{(j)}) - (a_i + \lambda \cdot b_i) \cdot \beta^{(j)}}{(\alpha^{(j)} + \lambda \cdot \beta^{(j)})^2} \\
									&= \frac{b_i \cdot \alpha^{(j)} - a_i \cdot \beta^{(j)}}{(\alpha^{(j)} + \lambda \cdot \beta^{(j)})^2}
\end{align*}
and, thus, does not change its sign within $I_j$. Hence, within the interval~$I_j$, each scaled profit~$\pt_i$ has a monotone behavior. Moreover, it holds that $0 \leq f_i(\lambda) \leq 2$ for all $\lambda \in I_j$ since each item either has a non-negative profit within the whole interval or it will be ignored and since $p_i(\lambda) \leq p^*(\lambda) \leq 2 \cdot \varphi(\lambda)$. These observations yield that each scaled profit~$\pt_i$ changes its (integral) value at most $\OO(\frac{n}{\varepsilon})$~times within $I_j$ since we only need to consider values for $\pt_i$ between $0$ and $\frac{2n}{\varepsilon}$. Hence, the above recursive formulae only change $\OO(\frac{n^2}{\varepsilon})$~times within each $I_j$, in which case we have to repeat the computation of the values~$w(i,p)$. This yields a total computational overhead of $\OO(\frac{n^4}{\varepsilon^2})$ per interval and, since there are at most $\mathcal{O}(n^2)$~intervals, a total running time of $\OO(\frac{n^6}{\varepsilon^2})$ for the parametric FPTAS. This running time will be significantly improved in the next subsection.

It should be noted that we need to take care of a proper definition of the returned intervals: For example, consider two scaled profits of the forms~$\pt_i = \floor{1 + \lambda}$ and $\pt_j = \floor{1 - \lambda}$. For the critical value~$\lambda_1 = 1$, both profits evaluate to $1$. However, for $\lambda_1' \colonequals \lambda_1 + \delta$ and $\lambda_1'' \colonequals \lambda_1 - \delta$ for a small value of $\delta$, one of the profits already changes its integral value and the dynamic programming scheme may behave differently. One simple solution is to assess that we add a single-point interval~$[\lambda_1,\lambda_1]$ for each critical value as well as two open intervals of the forms~$(\lambda_0,\lambda_1)$ and $(\lambda_1,\lambda_2)$, where $\lambda_0$ and $\lambda_2$ are adjacent critical values. The returned (ordered) sequence of intervals then alternates between single-point intervals and open intervals. For an open interval, we can obtain an approximate solution by setting $\lambda$ to the middle point of the interval and performing the dynamic program for the corresponding constant scaled profits.


\subsection{Improved Analysis}
\label{sec:FPTAS:Analysis}

The major drawback of the above algorithm is that we basically need to reset the whole procedure whenever the function~$\varphi$ changes its behavior. With this approach, we were able guarantee that each scaled profit has a monotone behavior such that each possible integral value is only attained at most once per interval. As it will be shown in this Section, we are somewhat allowed to ``ignore'' these changes without losing the guarantee that each possible value of the scaled profits will only be attained a constant number of times.

\begin{theorem}\label{thm:BoundedNumberOfValues}
	For each item~$i \in \{1,\ldots,n\}$, the scaled profits~$\pt_i(\lambda)$ attain each value in $\{ 0, \ldots, \frac{2n}{\varepsilon} \}$ at most three times as $\lambda$ increases from $-\infty$ to $+\infty$.
\end{theorem}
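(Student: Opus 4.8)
The plan is to fix an item~$i$ and an integer value~$v \in \{0,\ldots,\frac{2n}{\varepsilon}\}$ and to bound the number of maximal intervals on which $\pt_i(\lambda) = v$. On the set where $\varphi(\lambda) > 0$ we have $\pt_i(\lambda) = \floor{\frac{n}{\varepsilon} f_i(\lambda)}$ with $f_i(\lambda) = \frac{p_i(\lambda)}{\varphi(\lambda)}$, so $\pt_i(\lambda) = v$ is equivalent to the band condition $\frac{\varepsilon v}{n} \leq f_i(\lambda) < \frac{\varepsilon(v+1)}{n}$. I would therefore reduce the whole claim to counting the connected components of the preimage of such a half-open band under $f_i$, uniformly over all values~$v$.

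The key structural observation I would exploit is a convexity argument rather than the per-interval monotonicity used in Section~\ref{sec:FPTAS:Parametric}. For a threshold~$t \geq 0$ define $N_t(\lambda) \colonequals p_i(\lambda) - t \cdot \varphi(\lambda)$. Since $p_i$ is affine and $\varphi$ is convex, the term $-t \cdot \varphi$ is concave for $t \geq 0$, so $N_t$ is concave; consequently its super-level set $U_t \colonequals \{\lambda : N_t(\lambda) \geq 0\}$ is a single (possibly empty or unbounded) interval. Because $\varphi \geq 0$ everywhere, one also obtains the nestedness $U_{t'} \subseteq U_t$ whenever $0 \leq t \leq t'$. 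Setting $t = \frac{\varepsilon v}{n}$ and $t' = \frac{\varepsilon(v+1)}{n}$, on $\{\varphi > 0\}$ the band condition above is exactly $\lambda \in U_t \setminus U_{t'}$.

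From here the counting is immediate: as $U_{t'} \subseteq U_t$ are nested intervals, the difference $U_t \setminus U_{t'}$ consists of at most two intervals, namely the portions of $U_t$ lying to the left and to the right of $U_{t'}$. It then remains to intersect this set with $\{\varphi > 0\}$, i.e.\ to remove the points where $\varphi$ vanishes. Here I would use that $\varphi$ is convex and nonnegative, so its zero set $\{\varphi = 0\}$ is itself a single interval; removing one interval from a union of at most two intervals can create at most one additional component, leaving at most three. This yields the claimed bound for every value~$v$ and every item~$i$, and simultaneously explains where the number three comes from.

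The step I expect to require the most care is the treatment of the degenerate set $\{\varphi = 0\}$, where $f_i$ is undefined and the item is discarded by the dynamic program: I must argue that these points contribute exactly the single ``extra'' split counted above and nothing more, and I must reconcile the half-open nature of the band $[\frac{\varepsilon v}{n}, \frac{\varepsilon(v+1)}{n})$ and of the floor with the open/closed conventions for the returned intervals discussed at the end of Section~\ref{sec:FPTAS:Parametric}. The concavity of~$N_t$ and the nestedness $U_{t'} \subseteq U_t$ are the substantive facts; everything else is elementary interval bookkeeping.
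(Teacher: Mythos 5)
Your proof is correct, and it takes a genuinely different route from the paper's. The paper argues segment by segment: on each affine piece $\varphi(\lambda)=\alpha^{(j)}+\lambda\cdot\beta^{(j)}$ it computes the derivative of $f_i=p_i/\varphi$, notes that its sign is the sign of $b_i\cdot\alpha^{(j)}-a_i\cdot\beta^{(j)}$, and then shows geometrically that the points $(\beta^{(j)},\alpha^{(j)})^T$ form a concave chain in the $b$-$a$-plane (the edge slopes are $-\lambda_j$, decreasing in $j$), so this chain crosses the line through the origin and $(b_i,a_i)^T$ at most twice; hence $f_i$ has at most three monotone pieces and hits each band at most once per piece. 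You instead argue globally: for $t\geq 0$ the function $N_t=p_i-t\cdot\varphi$ is concave, so $U_t$ is an interval, and $\varphi\geq 0$ gives the nesting $U_{t'}\subseteq U_t$, making the band preimage $U_t\setminus U_{t'}$ a difference of nested intervals. This is more elementary (no derivatives, no piecewise analysis, no dual-space geometry) and uses only convexity and nonnegativity of $\varphi$, so it would apply verbatim to any such approximating function, not just a piecewise linear one. Two observations. First, your bound is in fact stronger than you claim: the excision of $\{\varphi=0\}$ is vacuous, since at a zero of $\varphi$ the two defining conditions of the band degenerate to $p_i(\lambda)\geq 0$ and $p_i(\lambda)<0$ simultaneously, so $\{\varphi=0\}$ is disjoint from $U_t\setminus U_{t'}$; your argument thus shows each value is attained at most \emph{twice}, and the third component you budget for never materializes (this does not contradict the paper, whose bound of three is simply not tight). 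Second, the paper's derivative-based formulation is not only a proof device: Section~\ref{sec:FPTAS:Analysis} reuses the resulting ``at most three monotone partitions per item'' to locate the critical values of $\lambda$ by binary search. Your proof does not directly deliver that monotone partition, though the algorithm is easily repaired in your framework (the endpoints of each $U_t$ are roots of the concave piecewise linear function $N_t$ and can likewise be found by binary search over the segments of $\varphi$), so the theorem and its algorithmic use both survive with this small supplement.
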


\begin{proof}
	In order to prove the claim, it suffices to show that the sign of the first derivative of each function~$f_i$ changes at most twice while $\lambda$ increases. As above, let $I_1,\ldots,I_q$ denote the intervals for which $\varphi$ takes on some affine form~$\varphi(\lambda) = \alpha^{(j)} + \lambda \cdot \beta^{(j)}$ such that
	\[ f_i(\lambda) = \frac{a_i + \lambda \cdot b_i}{\alpha^{(j)} + \lambda \cdot \beta^{(j)}} \]
	for $\lambda \in I_j$. Since $\varphi$ is convex and continuous, it holds that $\beta^{(j)} \leq \beta^{(j+1)}$ for $j \in \{1,\ldots,q-1\}$ and that there is some index~$h$ such that $\alpha^{(j)} \leq \alpha^{(j+1)}$ for $j \in \{1,\ldots,h\}$ and $\alpha^{(j)} \geq \alpha^{(j+1)}$ for $j \in \{h+1,\ldots,q-1\}$. In fact, due to the construction of the intervals, these inequalities hold in the strict sense since $\beta^{(j)} = \beta^{(j+1)}$ would also imply that $\alpha^{(j)} = \alpha^{(j+1)}$ by continuity of $\varphi$, so both segments would belong to the same interval. Hence, if we plot the points~$(\beta^{(j)},\alpha^{(j)})^T$ into a $b$-$a$-space, we get a picture as shown in Figure~\ref{fig:b_a_space}. Moreover, for each $j \in \{1,\ldots,q-1\}$, there is some $\lambda_j \in \mathbb{R}$ with
	\[ \alpha^{(j)} + \lambda_j \cdot \beta^{(j)} = \alpha^{(j+1)} + \lambda_j \cdot \beta^{(j+1)} \]
	due to the continuity and construction of $\varphi$. Hence, since $\beta^{(j+1)} = \beta^{(j)} + \delta_j$ for some value~$\delta_j > 0$, we get that
	\begin{align*}
		\alpha^{(j+1)} &= \alpha^{(j)} + \lambda_j \cdot \beta^{(j)} - \lambda_j \cdot \beta^{(j+1)} \\
		&= \alpha^{(j)} - \lambda_j \cdot \delta_j,
	\end{align*}
	so the slope of the line that connects the points~$(\beta^{(j)},\alpha^{(j)})^T$ and $(\beta^{(j+1)},\alpha^{(j+1)})^T$ evaluates to
	\begin{align*}
		\frac{\alpha^{(j+1)} - \alpha^{(j)}}{\beta^{(j+1)} - \beta^{(j)}} = \frac{-\lambda_j \cdot \delta_j}{\delta_j} = -\lambda_j
	\end{align*}
	and, thus, decreases while $j$ increases. This yields that the piecewise linear function~$g$ connecting each of the points~$(\beta^{(j)},\alpha^{(j)})^T$ in the order $j=1,\ldots,q$ is concave\footnotemark\xspace (as illustrated by the highlighted area in Figure~\ref{fig:b_a_space}).

	\footnotetext{This can also be seen by arguments used in the field of computational geometry: It is known that the upper envelope of a set of affine functions of the form~$c \cdot \lambda - d$ corresponds to the lower surface of a convex hull in the \emph{dual space}, which is clearly convex. Such a dual space contains a point~$(c,d)^T$ for each affine function of the above form in the primal space and, conversely, an affine function~$\lambda \cdot c - \mu$ for each point~$(\lambda,\mu)^T$ in the primal space. Hence, each line segment (breakpoint) of our upper envelope corresponds to a corner point (line segment) of the lower surface of a convex hull in the dual space (cf. \citep{BergComputationalGeometry}). In fact, Figure~\ref{fig:b_a_space} shows the dual space mirrored at the $b$-axis.}

	Now, for some specific item~$i \in \{1,\ldots,n\}$, consider the first derivate of $f_i$, which as we have seen evaluates to
	\[ \frac{df_i}{d\lambda}(\lambda) = \frac{b_i \cdot \alpha^{(j)} - a_i \cdot \beta^{(j)}}{(\alpha^{(j)} + \lambda \cdot \beta^{(j)})^2} \]
	as shown above. Since the denominator is always positive, we need to bound the number of times the sign of the numerator changes. The value $b_i \cdot \alpha^{(j)} - a_i \cdot \beta^{(j)}$ can be interpreted as the inner product of the vectors~$(-a_i,b_i)$ and $(\beta^{(j)},\alpha^{(j)})^T$. Hence, since $(-a_i,b_i) \cdot (b_i,a_i)^T = 0$, the sign of the derivate changes whenever the function $g$ crosses the line going through the origin and the point~$(b_i,a_i)^T$ (see the dotted line in Figure~\ref{fig:b_a_space}). Since $g$ is concave as shown above, this can happen at most two times while $\lambda$ increases, which yields the claim.
\qed\end{proof}

\begin{figure}[ht!]
	\begin{tikzpicture} \node[scale=0.90] {\begin{tikzpicture}
		\tikzset{cross/.style={cross out, draw=black, minimum size=2*(#1-\pgflinewidth), inner sep=0pt, outer sep=0pt}, cross/.default={2.5pt}}

		\foreach [count=\i] \x/\y in {-3/-2.5, -2.8/-1.5, -2.5/-0.5, -2.0/0.6, -1.5/1.2, -0.5/1.8, 0.5/1.8, 1.2/1.5, 1.8/1.0, 2.3/0.3, 2.7/-0.6, 3.0/-1.6} {
			\node (P\i) at (\x,\y) {};
		}

		\fill[black!10!white] (-3,-3) \foreach \i in {1,...,12}{ -- (P\i.center) } -- (3,-3) -- cycle;

		\draw[black!50!white] \foreach \i [remember=\i as \lasti (initially 1)] in {2,...,12}{(P\lasti.center) -- (P\i.center)};

		\foreach \i in {1,...,12}{
			\node[cross] at (P\i) {};
		}

		\draw[->] (-4.4,0) -- (4.4,0) node[above right] {\small $b$};
		\draw[->] (0,-3) -- (0,3) node[above right] {\small $a$};

		\node at (-2.0,-2.46) {\scriptsize $(\beta^{(j)},\alpha^{(j)})^T$};
		\node at (-1.5,-1.48) {\scriptsize $(\beta^{(j+1)},\alpha^{(j+1)})^T$};

		\draw[dotted,line width=0.3mm] (-4.4,2) -- (4.4,-2);
		\draw[->] (0,0) -- (-2.6,1.182);
		\node at (-2.2, 1.45) {\scriptsize $(b_i,a_i)^T$};
	\end{tikzpicture}}; \end{tikzpicture}
	\caption{Plotting the slope~$\beta^{(j)}$ and intersect~$\alpha^{(j)}$ of each affine segment of $\varphi$. The piecewise linear function connecting these points is concave and intersects with each straight line through the origin at most twice.}
	\label{fig:b_a_space}
\end{figure}

Theorem~\ref{thm:BoundedNumberOfValues} shows that each possible profit is only attained a constant number of times per item \emph{although} the involved functions~$f_i$ are rational functions whose denominator changes for increasing $\lambda$. Hence, each item only creates $\OO(\frac{n}{\varepsilon})$ subintervals as opposed to the $\OO(n^2 \cdot \frac{n}{\varepsilon})$~subintervals proven in Section~\ref{sec:FPTAS:Parametric}. It remains to show that we can determine these subintervals efficiently.

As shown in the proof of Theorem~\ref{thm:BoundedNumberOfValues}, the slope of each function~$f_i$ changes at most twice, yielding for each item up to three partitions of the set of intervals~$I_1,\ldots,I_q$ such that $f_i$ is monotonous within each partition. By scanning through the sequence of intervals of $\varphi$, we can determine these three partitions for all items in total time~$\OO(n \cdot n^2) = \OO(n^3)$. For each item, each partition, and each possible scaled profit in $\{0,\ldots,\frac{2n}{\varepsilon}\}$ (which can be attained only once in the partition), we perform a binary search on the intervals in the partition in order to find a value for $\lambda$ at which the corresponding profit is attained, if such a $\lambda$ exists. This can be done in $\OO(n \cdot 3 \cdot \frac{n}{\varepsilon} \cdot \log n^2) = \OO(\frac{n^2}{\varepsilon} \cdot \log n)$~time in total. Finally, we need to sort this list of critical values of $\lambda$ in order to determine the subintervals of the FPTAS, which can be done in $\OO(n \cdot \frac{n}{\varepsilon} \cdot \log (n \cdot \frac{n}{\varepsilon})) = \OO(\frac{n^2}{\varepsilon} \cdot \log \frac{n}{\varepsilon})$~time.

In summary, we need $\OO(n^3 + \frac{n^2}{\varepsilon} \cdot \log \frac{n}{\varepsilon})$~time to determine the $\OO(\frac{n^2}{\varepsilon})$~subintervals of the FPTAS. For each of these subintervals, we need to perform the traditional FPTAS with the corresponding (constant) scaled profits, which can be done in $\OO(\frac{n^2}{\varepsilon})$~time each. Hence, we obtain an FPTAS for the parametric knapsack problem running in $\OO(\frac{n^4}{\varepsilon^2})$~time in total. This yields the main result of this paper:

\begin{theorem}
	There is an FPTAS for the parametric knapsack problem running in strongly polynomial time $\OO(\frac{n^4}{\varepsilon^2})$. \qed
\end{theorem}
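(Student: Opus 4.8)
The plan is to assemble the components developed in the previous sections into a single algorithm and then verify its correctness and running time. First I would invoke the parametric $\frac{1}{2}$-approximation from Section~\ref{sec:TwoApprox:Parametric}: computing the upper envelope~$\varphi$ yields a continuous, convex, piecewise linear function with $\OO(n^2)$~breakpoints in $\OO(n^2 \log n)$~time, together with its affine segments~$I_1,\ldots,I_q$, $q \in \OO(n^2)$. On each segment $\varphi(\lambda) = \alpha^{(j)} + \lambda \cdot \beta^{(j)}$, and since $\frac{1}{2} \cdot p^*(\lambda) \leq \varphi(\lambda) \leq p^*(\lambda)$, the scaled profits $\pt_i(\lambda) = \floor{\frac{n \cdot p_i(\lambda)}{\varepsilon \cdot \varphi(\lambda)}}$ take integral values in $\{0,\ldots,\frac{2n}{\varepsilon}\}$, as established in Section~\ref{sec:FPTAS:Parametric}.

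Next I would partition the real line into the maximal subintervals on which every scaled profit~$\pt_i$ is constant. By Theorem~\ref{thm:BoundedNumberOfValues}, each~$\pt_i$ attains each of the $\OO(\frac{n}{\varepsilon})$~admissible integer values at most three times, so a single item contributes $\OO(\frac{n}{\varepsilon})$~breakpoints and all $n$~items together yield only $\OO(\frac{n^2}{\varepsilon})$~subintervals. To compute them efficiently, I would, for each item, split $I_1,\ldots,I_q$ into the (at most three) blocks of consecutive segments on which $f_i$ is monotone, which by a single scan takes $\OO(n \cdot q) = \OO(n^3)$~time overall. Within each monotone block each value is attained at most once, so a binary search over the segments locates the critical~$\lambda$ for each target value in $\OO(\log q)$~time, costing $\OO(\frac{n^2}{\varepsilon} \log n)$~in total; sorting the resulting $\OO(\frac{n^2}{\varepsilon})$~critical values takes $\OO(\frac{n^2}{\varepsilon} \log \frac{n}{\varepsilon})$~time. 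To avoid the boundary ambiguity noted in Section~\ref{sec:FPTAS:Parametric}, I would emit alternating single-point and open intervals.

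Finally, on each of the $\OO(\frac{n^2}{\varepsilon})$~subintervals the scaled profits are constant, so evaluating the dynamic program of Section~\ref{sec:FPTAS:Traditional} at any interior value of~$\lambda$ returns a solution that is $(1-\varepsilon)$-approximate for every~$\lambda$ in that subinterval; this costs $\OO(\frac{n^2}{\varepsilon})$~per subinterval. Summing the contributions $\OO(n^2 \log n)$, $\OO(n^3 + \frac{n^2}{\varepsilon} \log \frac{n}{\varepsilon})$, and $\OO(\frac{n^2}{\varepsilon} \cdot \frac{n^2}{\varepsilon})$ gives the claimed $\OO(\frac{n^4}{\varepsilon^2})$~bound, and since every step counts only in terms of~$n$ and~$\frac{1}{\varepsilon}$ and never in the bit-lengths of the~$a_i$, $b_i$, and~$W$, the running time is strongly polynomial.

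I expect the main obstacle to lie in the efficient determination of the subintervals rather than in their mere counting: one must argue that the monotone blocks of each~$f_i$ align with the segments~$I_1,\ldots,I_q$ so that the binary search is valid, and one must handle the degenerate cases---a value attained exactly at a breakpoint of~$\varphi$, a segment on which $\varphi \equiv 0$ (where $p^* \equiv 0$ and the all-zero solution is optimal), and the single-point versus open-interval bookkeeping---without inflating the interval count beyond $\OO(\frac{n^2}{\varepsilon})$ or the per-interval approximation guarantee.
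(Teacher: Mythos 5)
Your proposal is correct and follows essentially the same route as the paper: the parametric $\frac{1}{2}$-approximation~$\varphi$, Theorem~\ref{thm:BoundedNumberOfValues} to bound the number of subintervals by $\OO(\frac{n^2}{\varepsilon})$, the per-item monotone blocks with binary search and sorting to locate the critical values, and the $\OO(\frac{n^2}{\varepsilon})$ dynamic program on each subinterval, summing to $\OO(\frac{n^4}{\varepsilon^2})$. The degenerate cases you flag (segments with $\varphi \equiv 0$, breakpoint ambiguity via alternating single-point and open intervals) are handled in the paper exactly as you propose.
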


\section{Combining FPTASs}
\label{sec:FPTAS2}

In the previous section, we have seen that the parametric knapsack problem can be divided into $\OO(\frac{n^2}{\varepsilon})$~subproblems, for which we need to provide $(1 - \varepsilon)$-approximate solutions. These subproblems were created in a way such that the scaled profits are constant for each subproblem. Each of them can be seen as a new, independent, and non-parametric knapsack instance (albeit a special one, since the profits are now of polynomial size). In Section~\ref{sec:FPTAS}, we simply solved each of the resulting knapsack instances exactly in $\mathcal{O}(\frac{n^2}{\varepsilon})$~time. The main observation of this section is that we actually do not necessarily need to solve the subproblems exactly -- it suffices to solve them up to a factor of $(1 - \varepsilon)$ using \emph{any} FPTAS for the traditional knapsack problem.

Consider one fixed interval~$I'$ of the $\OO(\frac{n^2}{\varepsilon})$~subintervals of the problem. For each $\lambda \in I'$, the scaled profits~$\pt_i$ take on constant values. Moreover, it holds that $\frac{1}{2} \cdot p^*(\lambda) \leq \varphi(\lambda) \leq p^*(\lambda)$ for any $\lambda \in I'$ as shown in Section~\ref{sec:TwoApprox:Parametric}. Let $\o{x}$ denote a solution returned by some FPTAS for the traditional knapsack problem that is called on an instance with the scaled profits and let $x$ denote an exact solution to the scaled instance (which, e.g., can be obtained by the dynamic programming scheme as above). Clearly, it holds that
\[ \widetilde{p}(\o{x}) \colonequals \sum_{i=1}^n \pt_i \cdot \o{x}_i \geq (1 - \varepsilon) \cdot \sum_{i=1}^n \pt_i \cdot x_i \equalscolon \widetilde{p}(x). \]
For any fixed $\lambda \in I'$ and an optimal solution~$x^*$ for the unscaled problem at $\lambda$, we then get the following approximation guarantee for the solution~$\o{x}$:
{\allowdisplaybreaks
\begin{align*}
	p(\o{x}) &= \sum_{i=1}^n p_i \cdot \o{x}_i \geq \sum_{i=1}^n M \cdot \floor{\frac{p_i}{M}} \cdot \o{x}_i \\
	&= M \cdot \widetilde{p}(\o{x}) \\
	&\geq (1 - \varepsilon) \cdot M \cdot \widetilde{p}(x) \\
	&\geq (1 - \varepsilon) \cdot M \cdot \widetilde{p}(x^*) \\
	&\geq (1 - \varepsilon) \cdot M \cdot \sum_{i=1}^n \left(\frac{p_i}{M} - 1 \right) \cdot x^*_i \\
	&= (1 - \varepsilon) \cdot \left(p^*(\lambda) - M \cdot \sum_{i=1}^n x^*_i \right) \\
	&\geq (1 - \varepsilon) \cdot \left(p^*(\lambda) - \varepsilon \cdot \varphi(\lambda) \right) \\
	&\geq (1 - \varepsilon) \cdot \left(p^*(\lambda) - \varepsilon \cdot p^*(\lambda) \right) \\
	&\geq (1 - \varepsilon)^2 \cdot p^*(\lambda) \\
	&= (1 - 2\varepsilon + \varepsilon^2) \cdot p^*(\lambda) \\
	&\geq (1 - 2\varepsilon) \cdot p^*(\lambda).
\end{align*}%
}%
Setting $\varepsilon' \colonequals \frac{\varepsilon}{2}$ then yields the desired approximation guarantee. Hence, although the subproblems were designed in a way such that the basic dynamic programming scheme does not change its behavior, we do not necessarily need to execute it but can also use an FPTAS instead.

\begin{theorem}
	There is an FPTAS for the parametric knapsack problem running in $\OO(\frac{n^2}{\varepsilon} \cdot A(n,\varepsilon))$~time, where $A(n,\varepsilon)$ denotes the running time of an FPTAS for the traditional knapsack problem. \qed
\end{theorem}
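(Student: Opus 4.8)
The plan is to reuse the decomposition of the real line into the $\OO(\frac{n^2}{\varepsilon})$ subintervals constructed in Section~\ref{sec:FPTAS:Analysis} (via Theorem~\ref{thm:BoundedNumberOfValues}), on each of which the scaled profits~$\pt_i$ are constant, and to replace the exact dynamic program by an arbitrary FPTAS for the traditional knapsack problem. The subinterval count and the preprocessing cost are inherited unchanged; the only thing that changes is how each individual non-parametric subproblem is solved. First I would observe that on a fixed subinterval~$I'$ the constant scaled profits define a genuine non-parametric knapsack instance: items whose profit is negative throughout~$I'$ are discarded (they never appear in an optimal solution), and the remaining items carry non-negative integral profits bounded by~$\frac{2n}{\varepsilon}$, so any off-the-shelf FPTAS is applicable and runs in~$A(n,\varepsilon)$ time per subinterval.

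The heart of the argument is to control the two independent sources of error. Solving the scaled subproblem only approximately introduces a factor~$(1-\varepsilon)$ relative to an exact solution~$x$ of the scaled instance, while the scaling itself (with $M = \frac{\varepsilon\cdot\varphi(\lambda)}{n}$) introduces the familiar additive loss of at most~$M$ per chosen item. I would chain these together exactly as in the displayed computation preceding the statement: starting from $p(\o{x}) \ge M\cdot\widetilde{p}(\o{x})$, using $\widetilde{p}(\o{x}) \ge (1-\varepsilon)\,\widetilde{p}(x) \ge (1-\varepsilon)\,\widetilde{p}(x^*)$, and bounding the total scaling loss $M\cdot\sum_i x^*_i \le M\cdot n = \varepsilon\cdot\varphi(\lambda) \le \varepsilon\cdot p^*(\lambda)$, where the last step invokes $\varphi(\lambda)\le p^*(\lambda)$ from Section~\ref{sec:TwoApprox:Parametric}. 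This yields $p(\o{x}) \ge (1-\varepsilon)^2\, p^*(\lambda) \ge (1-2\varepsilon)\, p^*(\lambda)$ for every $\lambda \in I'$.

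Finally I would rescale: running the whole scheme with $\varepsilon' \colonequals \frac{\varepsilon}{2}$ in place of~$\varepsilon$ turns the $(1-2\varepsilon')$-guarantee into the desired $(1-\varepsilon)$-guarantee, while changing both the subinterval count~$\OO(\frac{n^2}{\varepsilon'})$ and the per-subproblem cost~$A(n,\varepsilon')$ only by constant factors. Multiplying the $\OO(\frac{n^2}{\varepsilon})$ subintervals by the $A(n,\varepsilon)$ cost per subinterval, and noting that the $\OO(n^3 + \frac{n^2}{\varepsilon}\log\frac{n}{\varepsilon})$ preprocessing is dominated, gives the claimed bound of $\OO(\frac{n^2}{\varepsilon} \cdot A(n,\varepsilon))$.

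I expect the main obstacle to be the bookkeeping around composing the two approximation guarantees cleanly: in particular, ensuring that the error from the inner FPTAS multiplies rather than adds to the scaling error (so that the product $(1-\varepsilon)^2$ is what appears, not a weaker $(1-2\varepsilon)$ obtained additively), and that replacing~$\varepsilon$ by~$\frac{\varepsilon}{2}$ does not secretly inflate $A(n,\varepsilon)$ beyond $\OO(A(n,\varepsilon))$ for the intended FPTAS. A secondary point worth verifying is that the inner FPTAS only ever sees non-negative profits, so that the per-subinterval discarding of negative-profit items is legitimate and interacts correctly with the optimal unscaled solution~$x^*$ used in the bound.
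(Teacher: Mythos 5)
Your proposal is correct and follows essentially the same route as the paper's own argument: the same reuse of the $\OO(\frac{n^2}{\varepsilon})$ subintervals with constant scaled profits, the same chained inequality combining the inner FPTAS loss with the scaling loss via $\varphi(\lambda) \leq p^*(\lambda)$ to obtain $(1-\varepsilon)^2 \geq (1-2\varepsilon)$, and the same rescaling $\varepsilon' \colonequals \frac{\varepsilon}{2}$, with the preprocessing dominated because $A(n,\varepsilon) \in \Omega(n)$. Nothing is missing.
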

Note that it clearly holds that $A(n,\varepsilon) \in \Omega(n)$, so the running time of the main procedure will dominate the overheads to compute~$\varphi$ and the set of subintervals.

At present, the best FPTAS for the traditional knapsack problem is given by \citet{KellererPferschyFPTAS,KellererPferschyFPTAS2} and achieves a running time of
\begin{align*}
	\OO\left( n \cdot \min\left\{\log n, \log \frac{1}{\varepsilon} \right\} + \right. \quad\quad\\
	\left. \frac{1}{\varepsilon^2} \log \frac{1}{\varepsilon} \cdot \min\left\{ n, \frac{1}{\varepsilon} \log \frac{1}{\varepsilon} \right\} \right).
\end{align*}
Under the commonly used assumption that $n$ is much larger than $\frac{1}{\varepsilon}$ in practice \citep{LawlerKnapsack}, this running time evaluates to
\begin{align*}
	\OO\left( n \log \frac{1}{\varepsilon} + \frac{1}{\varepsilon^3} \log^2 \frac{1}{\varepsilon} \right),
\end{align*}
yielding an FPTAS for the parametric knapsack problem with a strongly polynomial running time of
\begin{align*}
	\OO\left( \frac{n^3}{\varepsilon} \log \frac{1}{\varepsilon} + \frac{n^2}{\varepsilon^4} \log^2 \frac{1}{\varepsilon} \right).
\end{align*}

\bibliographystyle{elsarticle-num-names}
\bibliography{/Users/holzhaus/Documents/Forschung/literature}

\end{document}